\newtheorem{Theorem}{Theorem}
\newtheorem{Lemma}{Lemma}
\newtheorem{Problem}{Problem}
\newtheorem{Remark}{Remark}
\newtheorem{Assumption}{Assumption}
\newtheorem{Definition}{Definition}
\newcommand{\bequ}{\begin{eqnarray}}
\newcommand{\eequ}{\end{eqnarray}}
\def\IR{{\mathbb R}}
\newcommand{\ds}{\text{d}s}
\begin{document}

\title{\LARGE \bf Prescribed-time convergence with input constraints: A control Lyapunov function based approach}
\author{Kunal Garg \and Ehsan Arabi \and Dimitra Panagou
\thanks{The authors are with the Department of Aerospace Engineering, University of Michigan, Ann Arbor, MI, USA; \texttt{\{kgarg, earabi, dpanagou\}@umich.edu}.}
\thanks{The authors would like to acknowledge the support of the Air Force Office of Scientific
Research under award number FA9550-17-1-0284.}}

\maketitle

\begin{abstract}
    In this paper, we present a control framework for a general class of control-affine nonlinear systems under spatiotemporal and input constraints. Specifically, the proposed control architecture addresses the problem of reaching a given final set $S$ in a prescribed (user-defined) time with bounded control inputs. To this end, a time transformation technique is utilized to transform the system subject to temporal constraints into an equivalent form without temporal constraints. The transformation is defined so that asymptotic convergence in the transformed time scale results into prescribed-time convergence in the original time scale. To incorporate input constraints, we characterize a set of initial conditions $D_M$ such that starting from this set, the closed-loop trajectories reach the set $S$ within the prescribed time. We further show that starting from outside the set $D_M$, the system trajectories reach the set $D_M$ in a finite time that depends upon the initial conditions and the control input bounds. We use a novel parameter $\mu$ in the controller, that controls the convergence-rate of the closed-loop trajectories and dictates the size of the set $D_M$. Finally, we present a numerical example to showcase the efficacy of our proposed method. 
\end{abstract}

\section{Introduction}

In many real-world applications, various types of constraints are present due to the structural and operational requirements of the considered system. 
% Control design of such systems is not a trivial task, as these constraint expose limitations on several aspects of control synthesis. 
For example, spatial constraints are common in safety-critical applications. 
There are a lot of studies on \textit{forward invariance} of sets where the objective is to design a control law such that the closed-loop system trajectories are contained in a given set for all times.
Some examples include: \cite{tee2009barrier}, where a Lyapunov-like barrier function based approach is utilized to guarantee asymptotic tracking, as well as ensuring that the system output always remains inside a given set; \cite{ames2017control,ames2014control}, where conditions using zeroing barrier functions are presented to ensure forward invariance of a desired set. The authors of \cite{barry2012safety} present sufficient conditions in terms of existence of a barrier certificate for forward invariance of a given set, and propose a sum-of-squares formulation to find the certificate. 

In addition to spatial constraints, temporal constraints appear in time-critical applications where completion of a task is required upon a given time instance. The concept of finite-time stability (FTS) has been studied to ensure convergence of solutions in finite time. In the seminal work \cite{bhat2000finite}, the authors introduce necessary and sufficient conditions in terms of a Lyapunov function for continuous, autonomous systems to exhibit FTS. Fixed-time stability (FxTS) \cite{polyakov2012nonlinear} is a stronger notion than FTS, where the time of convergence does not depend upon the initial conditions. Prescribed-time stability, or user-defined time stability or strongly predefined-time stability \cite{kan2017finite,aldana2019design,holloway2019prescribed}, imposes that the convergence time can be chosen arbitrarily. The authors of \cite{kan2017finite,yucelen2018finite} study a time transformation approach to \textit{stretch} the finite-time domain of interest $[0,T)$ to an infinite-time domain $ [0,\infty)$. With a proper choice of the transformation, the asymptotic convergence in the new stretched time domain inherently implies the finite-time convergence at the prescribed time $T$ in the original time domain (see also \cite{arabi2018further,arabi2019robustness,arabi2019spatio,song2017time} for more details). Input constraints, such as actuator saturation, is another class of constraints that is inevitable in practice. Since a limited control input can affect the region of fixed-time convergence, addressing spatiotemporal and input constraints simultaneously is a challenging control problem. 

In contrast to the forward invariance problem, the problem of reaching a user-defined set deals with designing a control law such that the closed-loop system trajectories, starting from outside some set $S$, reach the set $S$ in a given time. When the set $S$ contains only the equilibrium, then reaching the set $S$ reduces to the regular point stabilization problem. 
% The seminal paper \cite{sontag1989universal} establishes necessity and sufficiency of existence of a control Lyapunov function (CLF) for existence of a control input that can stabilize the closed-loop trajectories to the origin asymptotically. This paper then introduces the Sontag's formula for providing a control input using the CLF. The authors of \cite{ames2014rapidly} define exponentially stabilizing CLF to show exponential stability of the origin for a control affine system. 
Control synthesis for the problem of reaching a general set $S$ has received much attention. A recent result in  \cite{li2019finite} introduces the notion of FTS of closed sets for hybrid dynamical systems. The authors of \cite{li2018formally} formulate a quadratic program to ensure finite-time convergence to a set with input constraints. Although they consider input bounds, the approach does not yield closed-form controllers. Under the traditional notion of FTS, as defined in \cite{bhat2000finite}, the convergence time in \cite{li2019finite,li2018formally} depends upon the initial conditions.

In this paper, we study the problem of reaching a given set in a prescribed-time $T$ for a general class of control-affine systems with input constraints. We present sufficient conditions for reaching a general set $S = \{x\; |\; h(x)\leq 0\}$ in time $T$. We present novel sufficient conditions in terms of existence of a CLF to guarantee existence of a controller that derives the closed-loop trajectories to the set $S$. 

In contrast to the results in \cite{li2018formally}, we propose a control law in closed-form. 
Furthermore, in contrast to the results in \cite{li2019finite}, the closed-loop system trajectories resulting from our controller reach the given set in a prescribed-time that can be chosen arbitrarily and independently of the initial conditions in the absence of input constraints. To incorporate input constraints, we then characterize a set of initial conditions $D_M$ starting from which, the closed-loop trajectories reach the set $S$ in the prescribed time $T$. As expected, assigning an arbitrary prescribed convergence time for any initial condition may not be possible in the presence of actuator saturation. To this end, we derive a lower bound on the prescribed time of convergence $T$, so that the control input remains bounded at all times. When the initial conditions of the system are outside the set $D_M$, we show that the proposed controller yields convergence to the set $D_M$ within a finite time that depends on the initial conditions and the control input bounds. We use a novel tuning parameter $\mu$ in the controller and show how the size of the set $D_M$ and the convergence rate can be adjusted based on the value of this parameter. 

The rest of the paper is organized as follows. Section \ref{sec: math prelim} presents the notation used in the paper and some mathematical preliminaries. In Section \ref{sec: cont des}, we detail the controller design and present the main results of the paper. In Section \ref{sec: simulation}, we present a numerical example to explain the various aspects of the proposed method and we conclude the paper with our thoughts for future work in Section \ref{sec: conclusion}. 
    
\section{Mathematical Preliminaries}\label{sec: math prelim}
% \subsection{Notation and Definition}
%We denote by $\dot x(t) = \frac{\text{d}x(t)}{\dt}$ the time derivative of function $x(t)$ and by $y'(s) = \frac{\text{d}y(s)}{\ds}$ the derivative of $y(s)$ with respect to the stretched time $s$.
% We use a standard notation throughout this paper. Specifically,
% \subsection{Notation}
In the rest of the paper, $\IR$ denotes the set of real numbers, $\mathbb R_+$ denotes the set of non-negative real numbers,  $\IR^n$ denotes the set of $n \times 1$ real column vectors,
and $f(T^-)$ (respectively, $f(T^+)$)  denotes the left (respectively, the right) limit of the function $f$ at $T$.
In addition, we write $\textrm{bd}(S)$ for the boundary of the closed set $S$,
and $\|x\|_S = \inf_{y\in S}\|x-y\|$ for the distance of the point $x\notin S$ from the set $S$. A function $\alpha:\mathbb R_+\rightarrow \mathbb R_+$ is said to belong to class-$\mathcal K$, denoted as $\alpha\in \mathcal K$, if $\alpha(0) = 0$ and $\alpha$ is strictly increasing. The Lie derivative of a function $V:\mathbb R^n\rightarrow \mathbb R$ along a vector field $f:\mathbb R^n\rightarrow\mathbb R^n$ at a point $x\in \mathbb R^n$ is denoted as $L_fV \triangleq \frac{\partial V}{\partial x} f(x)$. 

%\section{Overview of Prescribed-time stability}
Next, we introduce the notion of prescribed-time stability. Consider a nonlinear system given by 
\begin{align}\label{ex sys}
\dot x(t) = f(x(t)), \quad x(0) = x_0,
\end{align}
where $x\in \mathbb R^n$ and $f: \mathbb R^n \rightarrow \mathbb R^n$ is continuous with $f(0)=0$. The origin is said to be an FTS equilibrium of \eqref{ex sys} if it is Lyapunov stable and \textit{finite-time convergent}, i.e., for all $x(0) \in \mathcal N \setminus\{0\}$, where $\mathcal N$ is some open neighborhood of the origin, $\lim_{t\to T} x(t)=0$, where $T = T(x(0))<\infty$ depends upon the initial condition $x(0)$ \cite{bhat2000finite}. The authors in \cite{polyakov2012nonlinear} define the notion of FxTS, where the time of convergence does not depend upon the initial condition. The following notion of prescribed-time stability allows the time of convergence $T$ to be chosen a priori. 
\begin{Definition}[\cite{kan2017finite,aldana2019design,holloway2019prescribed}]
The origin of \eqref{ex sys} is called as prescribed-time stable if the trajectories of \eqref{ex sys} reach the origin in time $T<\infty$, where $T>0$ is a user defined constant.
\end{Definition}

% \subsection{Time Transformation}
We introduce a time transformation technique for converting a user-defined time interval of interest $[0,T)$ into a stretched infinite-time interval $[0,\infty)$ \cite{kan2017finite,yucelen2018finite,arabi2018further}. Consider the function $\theta:\mathbb R_+\rightarrow \mathbb R_+$ defined as 
\begin{align}\label{t theta s}
    t = \theta (s), 
\end{align}
where $t\in [0, T)$, $s\in [0, \infty)$, and the function $\theta(s)$ satisfies 
\begin{subequations}\label{theta prop}
\begin{align}
    \theta(0)  & =0, \; \theta'(0) = 1,\label{theta 0 cond}\\
    s_1 >s_2\geq 0 &\implies \theta(s_1) > \theta(s_2),\label{theta inc cond}\\
    \lim_{s\to\infty}\theta(s) &= T,\; \lim_{s\to\infty}\theta'(s) = 0. \label{theta prop_d}
\end{align}
\end{subequations}
We need \eqref{theta 0 cond} to ensure continuity of the proposed controller, \eqref{theta inc cond} implies that $\theta(s)$ is strictly increasing, while \eqref{theta prop_d} is needed to ensure that asymptotic convergence in $s$ results into prescribed time convergence in $t$. A candidate time transformation function satisfying \eqref{theta prop} is given by 
\bequ
t = \theta(s) \triangleq T(1-e^{-\frac{s}{T}}),  \label{P1_TT}
\eequ
where $T>0$ is the user-defined finite time. 

\begin{Remark}[\cite{benner2014large} Section 1.1.1.4] \label{rem_benner}
Let $\xi(t)$ denote a solution to the dynamical system
\bequ
\dot{x}(t) = f(t,x(t)), \quad x(0) = x_0.
\eequ
Let $t=\theta(s)$ satisfy \eqref{theta prop}, and define $\chi(s) = \xi(t)$ so that
\bequ
\chi'(s) = \theta'(s) f(\theta(s),\chi(s)), \quad \chi(\theta^{-1}(0)) = x_0,
\eequ
where  $\chi'(s) \triangleq \text{d}\chi(s)/\ds$, $\theta'(s)\triangleq \text{d}\theta(s)/\ds$ and $\lim_{s\to\infty}\chi(s) = \lim_{t\to T}\xi(t)$. 
\end{Remark} 

\begin{Remark}  \label{rem_xs}
For the sake of simplicity, considering the time transformation $t=\theta(s)$ and any signal $\eta(t)$, we write $\eta_s(s)$ to denote the signal $\eta(t)$ in the transformed time coordinate $s$, i.e., $\eta_s(s) \triangleq \eta(\theta(s))$. 
\end{Remark} 

\section{Prescribed-time Control design}\label{sec: cont des}
Consider the system
\begin{align}\label{cont affin sys}
    \dot x(t) = f(x) + g(x)u(t), \quad x(0) =x_0,
\end{align}
where $x(t)\in\mathbb R^n$ is the system state vector, $f:\mathbb R^n\rightarrow\mathbb R^n$, $g:\mathbb R^n\rightarrow\mathbb R^n$ and $u\in\mathbb R$ is the control input.
The problem statement is as follows. 
\begin{Problem}\label{P1}
Design the control input $u(t)$ so that 
\begin{itemize}
    \item The closed-loop trajectories of \eqref{cont affin sys} reach the set $S = \{x\; |\; h(x)\leq 0\}$ for a given function $h:\mathbb R^n\rightarrow \mathbb R$ in a user-defined time $T$, i.e., $\forall t\geq T$ and $\forall x_0 \in D\subset \mathbb R^n$, $x(t)\in S$;
    \item For a user-defined control bound $u_M>0$, $u$ satisfies
    \begin{align}\label{um bound P1}
        -u_M \leq u(t) \leq u_M, \; \forall \; t\geq 0.
    \end{align}
\end{itemize}
\end{Problem}

First, we utilize the time transformation given in \eqref{t theta s} to represent the system dynamics in \eqref{cont affin sys} in the new stretched infinite time interval $[0,\infty)$. From Remarks \ref{rem_benner} and \ref{rem_xs}, we obtain
\begin{align}\label{cont affin sys s}
    x_s'(s) = \bar f(s,x_s(s)) + \bar g(s,x_s(s))u_s(s),
\end{align}
where $x_s(s) = x(t)$, $x_s(\theta^{-1}(0)) = x_0$, $\bar f(s,x) = f(x)\theta'(s)$ and $\bar g(s,x) = g(x)\theta'(s)$. We now re-state Problem \ref{P1} in the new time-scale for \eqref{cont affin sys s}.

\begin{Problem}\label{P1 mod}
Design the control input $u_s(s)$ such that 
\begin{itemize}
    \item The closed-loop trajectories of \eqref{cont affin sys s} reach the set $S = \{x\; |\; h(x)\leq 0\}$ for a given function $h:\mathbb R^n \rightarrow \mathbb R$ asymptotically, i.e., $\forall x_s(0)\in D\subset \mathbb R^n$, $\lim_{s\to\infty} x_s(s)\in S$;
    \item The control input satisfies 
    \begin{align}\label{u bound s P2}
        -u_M \leq u_s(s) \leq u_M, \; \forall \; s\geq 0,
    \end{align}
    for a user-defined control bound $u_M>0$.
\end{itemize}
\end{Problem}

%\subsection{Control Lyapunov function}
The control Lyapunov function (CLF) is defined as follows. 
\begin{Definition}[\cite{sontag1989universal}]\label{CLF def}
A continuously differentiable, positive definite function $V:\mathbb R^n\rightarrow\mathbb R$ is called CLF for \eqref{cont affin sys} if for all $t\geq 0$ it satisfies
\begin{align}
    \inf\limits_{u}\{L_fV(x(t)) + L_gV(x(t))u\} \leq 0,
\end{align}
\end{Definition}

% For the case with input constraints, CLF with bounded input is defined as 
% \begin{Definition}[\cite{lin1991universal}]\label{CLF bounded u}
% \textbf{CLF with bounded input:} A continuously differentiable, positive definite function $V:\mathbb R^n\rightarrow\mathbb R$ is called CLF with bounded input for \eqref{cont affin sys} if it satisfies
% \begin{align}
    % \inf\limits_{u\in U}\{L_fV(x(t))+L_gV(x(t))u\}\leq 0,
% \end{align}
% for all $t\geq 0$. 
% \end{Definition}

We make the following assumption for system \eqref{cont affin sys}. 
\begin{Assumption}\label{assum CLF exist}
There exists a CLF $V$ for \eqref{cont affin sys}.
\end{Assumption}

\begin{Remark}
In contrast to \cite{li2018formally}, where the authors use the function $h(x)$ as the CLF, Assumption \ref{assum CLF exist} is more general and gives more freedom to choose the CLF as some function other than $h(x)$. It also allows more general class of set $S$ that can be considered since we do not need the function $h(x)$ to be continuously differentiable. 
\end{Remark}

From Definition \ref{CLF def}, $V$ is positive definite. Using \cite[Lemma 4.3]{khalil2002nonlinear}, there exist $\alpha_1, \alpha_2\in \mathcal K$ such that:
\begin{align}\label{v bound}
    \alpha_1(\|x\|_S) \leq V(x)\leq \alpha_2(\|x\|_S) ,\; \forall x.
\end{align}

Now, we present the controller for \eqref{cont affin sys s} so that its trajectories reach the set $S$ asymptotically. Define the functions $a, b$ as 
% \begin{subequations}
\begin{align}\label{a b func}
    a(s,x_s) &= L_{\bar f}V(x_s),\; b(s,x_s) = L_{\bar g}V(x_s).
\end{align}
% \end{subequations}
Also, define the functions $a_0, b_0$ as
% \begin{subequations}
\begin{align}\label{a0 b0 func}
    a_0(x_s) &= L_{f}V(x_s),\; b_0(x_s) = L_{g}V(x_s).
\end{align}
% \end{subequations}
Note that for time-invariant functions $f,g$, the functions $a_0, b_0$ are also time invariant. From \eqref{a b func} and \eqref{a0 b0 func}, we have
% \begin{subequations}
\begin{align}\label{a b a0 b0 rel}
    a_0(x_s) &= \frac{1}{\theta'(s)}a(s,x_s),\;    b_0(x_s) =\frac{1}{\theta'(s)}b(s,x_s).
\end{align}
% \end{subequations}
Using \eqref{a b func} and inspired from Sontag's formula \cite{sontag1989universal}, we define the control signal $\bar u(\cdot)$ as
\begin{align}\label{u s}
   \bar u(\theta'(s),x_s(s)) = \begin{cases}
    -\frac{a + \frac{\mu}{\theta'(s)}\sqrt{a^2+\frac{b^4}{(\theta'(s))^2}}}{b}, & b\neq 0, \\
    0, & b = 0,
    \end{cases}
\end{align}
where $\mu>0$ is a tuning parameter (see Remark \ref{rem7} for details on how to choose this parameter). In order to incorporate input constraints \eqref{u bound s P2}, define the set $D_0$ as
\begin{align}\label{set D0}
    D_0 = \{x\; |\; |\bar u(\cdot, \cdot)|\leq u_M\}.
\end{align}

\begin{Remark} 
The system \eqref{cont affin sys s} is a time-varying system, hence, a stabilizing feedback law for time-invariant system \eqref{cont affin sys} does not necessarily stabilize \eqref{cont affin sys s} (see \cite{jiang2009stabilization} for details).
\end{Remark}

% Before presenting the main result, we present the following intermediate result. 

\begin{Lemma}\label{V prime bound}
Along the closed-loop trajectories of \eqref{cont affin sys s} with $u_s = \bar u$, the CLF $V$ satisfies:
\begin{align}\label{v prime bound}
    V'(x_s) \triangleq \frac{\text{d}V(x_s)}{\ds} = -\mu\sqrt{a_0(x_s)^2+b_0(x_s)^4},
\end{align}
for all $x_s$ such that $b(s,x_s(s)) \neq 0$. 
\end{Lemma}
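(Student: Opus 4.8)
The statement is essentially a direct computation, so the plan is to carry out the chain rule, substitute the feedback law, and simplify using the scaling identity \eqref{a b a0 b0 rel}; the only point requiring a word of care is the positivity of $\theta'(s)$, which lets one pull $\theta'(s)$ out of the square root.

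First I would differentiate $V$ along the trajectories of the transformed system \eqref{cont affin sys s}. By the chain rule,
\begin{align*}
V'(x_s) = \frac{\partial V}{\partial x}\,x_s'(s) = \frac{\partial V}{\partial x}\bigl(\bar f(s,x_s) + \bar g(s,x_s)u_s(s)\bigr) = L_{\bar f}V(x_s) + L_{\bar g}V(x_s)\,u_s(s),
\end{align*}
which, by the definitions in \eqref{a b func}, equals $a(s,x_s) + b(s,x_s)\,u_s(s)$.

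Next, on the set where $b(s,x_s(s)) \neq 0$, I would substitute $u_s = \bar u$ from the first branch of \eqref{u s}. The term $a$ cancels against $b \cdot (-a/b)$, leaving
\begin{align*}
V'(x_s) = -\frac{\mu}{\theta'(s)}\sqrt{a^2 + \frac{b^4}{(\theta'(s))^2}}.
\end{align*}
Finally I would invoke \eqref{a b a0 b0 rel}, i.e. $a = \theta'(s)\,a_0$ and $b = \theta'(s)\,b_0$, and note that $\theta'(s) > 0$ for all $s \ge 0$: indeed $\theta'(0) = 1$ by \eqref{theta 0 cond}, $\theta$ is strictly increasing by \eqref{theta inc cond}, and (as with the canonical choice \eqref{P1_TT}) $\theta'$ stays positive. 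Substituting and factoring $\theta'(s)^2$ out of the radical gives
\begin{align*}
V'(x_s) = -\frac{\mu}{\theta'(s)}\sqrt{(\theta'(s))^2 a_0^2 + (\theta'(s))^2 b_0^4} = -\frac{\mu}{\theta'(s)}\,\theta'(s)\sqrt{a_0^2 + b_0^4} = -\mu\sqrt{a_0(x_s)^2 + b_0(x_s)^4},
\end{align*}
which is \eqref{v prime bound}.

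There is no genuine obstacle here: the result is a one-line algebraic consequence of the controller definition once the chain rule is applied. The single item worth flagging explicitly in the write-up is that $\theta'(s) > 0$ is needed both to make the feedback law \eqref{u s} well defined (the $1/\theta'(s)$ factors) and to extract $\theta'(s) = \sqrt{(\theta'(s))^2}$ from the square root without a sign ambiguity; this is guaranteed by properties \eqref{theta 0 cond}--\eqref{theta prop_d}.
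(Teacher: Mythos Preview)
Your proof is correct and follows essentially the same route as the paper: compute $V' = a + b\bar u$, cancel the $a$ term, and use the scaling relation \eqref{a b a0 b0 rel} to rewrite the remaining expression in terms of $a_0$ and $b_0$. The only addition is your explicit remark that $\theta'(s)>0$ is needed to pull $\theta'(s)$ out of the radical, which the paper leaves implicit.
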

\begin{proof}
For $x_s$ such that $b\neq 0$, the control input is given as $\bar u = -\big({a + \frac{\mu}{\theta'(s)}\sqrt{a^2+\frac{b^4}{(\theta'(s))^2}}}\big)/{b}$. The derivative of $V$ along the closed-loop trajectories of \eqref{cont affin sys s} results in
\begin{align*}
    V' &= L_{\bar f}V +L_{\bar g}Vu= a+b\left(-\frac{a + \frac{\mu}{\theta'(s)}\sqrt{a^2+\frac{b^4}{(\theta'(s))^2}}}{b}\right), \\
    & = -\mu\sqrt{\Big(\frac{a}{\theta'(s)}\Big)^2+\Big(\frac{b}{(\theta'(s))}\Big)^4} \overset{\eqref{a b a0 b0 rel}}{=}-\mu\sqrt{a_0^2+b_0^4}.
\end{align*}
\end{proof}
%The equation \eqref{v prime bound} implies that the derivative $V'(s,x_s)$ is equal to a time-invariant term $-W(x_s) = -\sqrt{a_0(x_s)^2+b_0(x_s)^4}$.

We make the following assumptions. 
\begin{Assumption}\label{assum: l k1 k2}
There exist positive constants $k_1,k_2,l$ such that the following hold for all $x\in D_1\subset \mathbb R^n$:
\begin{subequations} \label{a0 b0 bounds}
\begin{align}
    \|a_0(x)\| &= \Big\|\frac{\partial V}{\partial x}f(x)\Big\| \leq l\|x\|_S^2, \\
    k_1\|x\|_S\leq \|b_0(x)\| &= \Big\|\frac{\partial V}{\partial x}g(x)\Big\| \leq k_2\|x\|_S ,\label{b0 inequality}
\end{align}
\end{subequations}
where $D_1$ is an open set containing $S$.
\end{Assumption}

Note that \eqref{b0 inequality} implies that  
\begin{align}\label{w bound}
    W(x) \triangleq \sqrt{a_0(x)^2+b_0(x)^4} \geq c\|x\|_S^2,
\end{align}
where $c = k_1^2$, for all $x\in D_1\subset\mathbb R^n$.

\begin{Assumption}\label{assum W c}
There exist $c_1, c_2>0$ such that 
\begin{subequations}\label{c1 c2 bound}
\begin{align}
    c_1\|x\|_S^2 \leq \alpha_1(\|x\|_S), \label{c1 bound}\\
    \alpha_2(\|x\|_S) \leq c_2\|x\|_S^2, \label{c2 bound}
\end{align}
\end{subequations}
for all $x\in D_2\subset \mathbb R^n$, $D_2$ is an open set containing $S$.
\end{Assumption}

\begin{Remark}
Similar assumptions have been used in literature previously. Authors in \cite{ames2014rapidly} use these assumptions to define an exponentially stabilizing CLF (ES-CLF); \cite{romdlony2016stabilization} uses \eqref{c1 c2 bound} to show asymptotic stability; \cite{jankovic2001control} uses \eqref{w bound} in the definition of CLF (see \cite[Remark 5]{pepe2013sontag} as well).
\end{Remark}

We state the following results on boundedness and continuity of $\bar u$.
\begin{Lemma} \label{lemma u bound}
The following holds for some $k>0$, for all $s\geq 0$
\begin{align}\label{u bound k V0}
    |\bar u(\theta'(s),x_s(s))| \leq  k\sqrt{\frac{V_0}{c_1}}e^{(\frac{1}{T}-\frac{\mu c}{2c_2})s}.
\end{align}
Additionally, if $T>\frac{2c_2}{\mu c}$, then $\lim_{s\to\infty}\bar u = 0$.  
\end{Lemma}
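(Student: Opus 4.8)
The plan is to first establish that the CLF $V$, and hence the distance $\|x_s(s)\|_S$, decays exponentially along the closed-loop trajectories in the $s$-scale, and then to rewrite the law $\bar u$ in terms of $a_0,b_0$ and $\theta'(s)$ so that this state bound can be substituted directly.

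First I would invoke Lemma~\ref{V prime bound}: wherever $b(s,x_s(s))\neq 0$ we have $V'(x_s) = -\mu\sqrt{a_0(x_s)^2+b_0(x_s)^4}$. By \eqref{w bound} this is $\le -\mu c\|x_s\|_S^2$, and by \eqref{c2 bound}, $\|x_s\|_S^2 \ge \alpha_2(\|x_s\|_S)/c_2 \ge V(x_s)/c_2$, so $V'(x_s)\le -\tfrac{\mu c}{c_2}V(x_s)$. A standard comparison argument then gives $V(x_s(s))\le V_0 e^{-\frac{\mu c}{c_2}s}$ with $V_0 = V(x_0)$, and \eqref{c1 bound} yields $c_1\|x_s(s)\|_S^2\le \alpha_1(\|x_s(s)\|_S)\le V(x_s(s))$, hence
\begin{align*}
\|x_s(s)\|_S \le \sqrt{\tfrac{V_0}{c_1}}\,e^{-\frac{\mu c}{2c_2}s}.
\end{align*}
Note that on $D_1$ the lower bound in \eqref{b0 inequality} forces $b_0=0$ (equivalently $b=0$, since $\theta'(s)>0$) only when $x_s\in S$, in which case $\bar u=0$ and \eqref{u bound k V0} holds trivially; so it suffices to treat $b\neq 0$, and one should additionally assume (or arrange, via forward invariance of a sublevel set of $V$) that the trajectory stays in $D_1\cap D_2$ so that the above estimates are valid for all $s\ge 0$.

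Next, using \eqref{a b a0 b0 rel} to substitute $a=a_0\theta'(s)$, $b=b_0\theta'(s)$ into \eqref{u s}, the square root simplifies to $\theta'(s)\sqrt{a_0^2+b_0^4}$, and after cancelling $\theta'(s)$ one obtains $\bar u = -\tfrac{a_0}{b_0}-\tfrac{\mu}{\theta'(s)}\tfrac{\sqrt{a_0^2+b_0^4}}{b_0}$. Bounding each term with Assumption~\ref{assum: l k1 k2}, namely $|a_0|\le l\|x_s\|_S^2$, $\sqrt{a_0^2+b_0^4}\le \sqrt{l^2+k_2^4}\,\|x_s\|_S^2$, and $|b_0|\ge k_1\|x_s\|_S$, gives $|\bar u|\le \tfrac{l}{k_1}\|x_s\|_S + \tfrac{\mu\sqrt{l^2+k_2^4}}{k_1\,\theta'(s)}\|x_s\|_S$. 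For the transformation \eqref{P1_TT} we have $\theta'(s)=e^{-s/T}$, so $1/\theta'(s)=e^{s/T}$; combining with the state decay estimate and using $e^{-\frac{\mu c}{2c_2}s}\le e^{(\frac1T-\frac{\mu c}{2c_2})s}$ for $s\ge 0$ collapses both terms into $|\bar u|\le k\sqrt{V_0/c_1}\,e^{(\frac1T-\frac{\mu c}{2c_2})s}$ with $k=(l+\mu\sqrt{l^2+k_2^4})/k_1$, which is exactly \eqref{u bound k V0}. Finally, if $T>\tfrac{2c_2}{\mu c}$ then $\tfrac1T-\tfrac{\mu c}{2c_2}<0$ and the right-hand side tends to $0$, so $\lim_{s\to\infty}\bar u=0$.

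The computation is essentially routine once Lemma~\ref{V prime bound} and the quadratic sandwich estimates of Assumptions~\ref{assum: l k1 k2}--\ref{assum W c} are in hand; the only delicate point is bookkeeping about the domain --- the bounds \eqref{a0 b0 bounds}, \eqref{c1 c2 bound} hold only on $D_1,D_2$, so one must justify that the closed-loop trajectory does not leave $D_1\cap D_2$ (monotonicity of $V$ and a sublevel-set argument are the natural tools), and handle the degenerate case $b=0$ separately as noted above.
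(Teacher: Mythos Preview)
Your proof is correct and follows essentially the same route as the paper: exponential decay of $V$ and $\|x_s\|_S$ via Lemma~\ref{V prime bound} and Assumptions~\ref{assum: l k1 k2}--\ref{assum W c}, then rewriting $\bar u$ in terms of $a_0,b_0,\theta'(s)=e^{-s/T}$ and substituting the state bound. The only cosmetic difference is in how the square root is handled --- the paper uses $\sqrt{a_0^2+b_0^4}\le |a_0|+b_0^2$ to arrive at the explicit constant $k=\tfrac{l(1+\mu)}{k_1}+\mu k_2$ (which it later reuses in the definition of $D_M$ and in Theorem~\ref{finite time set reach}), whereas your direct bound gives $k=(l+\mu\sqrt{l^2+k_2^4})/k_1$; either is valid for the lemma as stated.
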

%{\color{blue}How does this sound?}
% {\textcolor{magenta}{ (23) does not require (22). Can we go with this instead:\\
% If there exist $T^\star>0$ such that $T^\star=m_2-1/T$
% where $m_2 = \frac{c}{2c_2}$,  then $\bar u\rightarrow 0$ as $s\rightarrow \infty$.\\
% What do you think?}}
\begin{proof}
See Appendix \ref{App proof lemma 1}. 
\end{proof}

In what follows, we assume that $ T>\frac{2c_2}{\mu c}$. Define
\begin{align}\label{set Dm}
    D_M = \{x\; |\; V(x)\leq c_1\left(\frac{u_M}{k}\right)^2\},
\end{align}
and $m_2 = \frac{\mu c}{2 c_2}$, so that for $x_0\in D_M$, we have 
\begin{align*}
    |\bar u|& \leq k\sqrt{\frac{V_0}{c_1}}e^{-m_2s+s/T}\leq k\sqrt{\frac{V_0}{c_1}}\leq  u_M.
\end{align*}
Note that $D_M\subseteq D_0$ where $D_0$ is defined as in \eqref{set D0}. We propose the control input as
\begin{align}\label{u t act}
    u(t) = \begin{cases}
    \bar u(\theta'(\theta^{-1}(t-T_M)),x(t)), & x(t)\in D_M,  \\
    \bar u(1,x(t)), & x(t)\in D_0\setminus D_M,  \\
    u_M\frac{\bar u(1,x(t))}{|\bar u(1,x(t))|}, & x(t)\notin D_0,\end{cases}
\end{align}
where $T_M$ is the time instant when the trajectories of \eqref{cont affin sys} enter the set $D_M$, i.e., $x(T_M)\in \textrm{bd}(D_M)$ and $x(t)\notin D_M$ for all $t<T_M$, and
\begin{align}\label{u t}
   \bar u(y,x) = \begin{cases}
    -\frac{a_0(x) + \frac{\mu}{y}\sqrt{a_0(x)^2+b_0(x)^4}}{b_0(x)}, & b_0(x)\neq 0, \\
    0, & b_0(x) = 0.
    \end{cases}
\end{align}
From \eqref{u t}, we obtain that 
\begin{align*}
    \bar u(1,x) = -\frac{a_0(x) +\mu\sqrt{a_0(x)^2+b_0(x)^4}}{b_0(x)},
\end{align*}
which is time-invariant and depends only on $x$. Note that the time transformation technique is utilized only within the set $D_M$,  i.e., $t=T_M + \theta(s)$ and we have $s = 0$ when the system trajectories of \eqref{cont affin sys} enter the set $D_M$ at $t = T_M$.

\begin{Lemma}\label{lemmma u cont}
The control input $u(t)$ defined as \eqref{u t act} is continuous and satisfies the control input constraints \eqref{um bound P1}. 
% on the boundaries of the sets $D_0$ and $D_M$.
\end{Lemma}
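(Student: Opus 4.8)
The plan is to verify the two assertions separately and region-by-region, following the three-branch structure of \eqref{u t act}: first that each branch is a continuous function on the (closed) region on which it is used, then that the branches agree on the shared boundaries $\mathrm{bd}(D_M)$ and $\mathrm{bd}(D_0)$ so that the concatenation is continuous, and finally that each branch is bounded in magnitude by $u_M$.

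For continuity within the regions, the basic object is the time-invariant map $x\mapsto\bar u(1,x)$ from \eqref{u t}, which appears (directly, or through the saturation) in the last two branches. Wherever $b_0(x)=L_gV(x)\neq0$ it is a ratio of continuous functions of $x$ (using continuity of $f,g$ and $V\in C^1$); by \eqref{b0 inequality} the only points with $b_0(x)=0$ are those in $S$, and there one checks, with $\sqrt{a_0^2+b_0^4}\le|a_0|+b_0^2$ and Assumption \ref{assum: l k1 k2},
\begin{align*}
  |\bar u(1,x)|\;\le\;\frac{(1+\mu)|a_0(x)|+\mu\, b_0(x)^2}{|b_0(x)|}\;\le\;\frac{(1+\mu)l+\mu k_2^2}{k_1}\,\|x\|_S\;\xrightarrow[x\to S]{}\;0,
\end{align*}
matching the value $\bar u(1,x)=0$ assigned on $S$; hence $\bar u(1,\cdot)$ is continuous on $D_1$, and the saturated expression $u_M\bar u(1,x)/|\bar u(1,x)|$ is continuous on $\mathbb R^n\setminus D_0$ because there $|\bar u(1,x)|>u_M>0$. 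The first branch is the composition $t\mapsto\bar u\big(\theta'(\theta^{-1}(t-T_M)),x(t)\big)$, which is continuous since $\theta^{-1}$, $\theta'$ and $x(\cdot)$ are continuous and $\bar u(y,x)$ is jointly continuous for $y>0$; the only delicate instants are those at which the trajectory reaches $S$ (i.e.\ $s\to\infty$), which I treat below.

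For the boundary matching: on $\mathrm{bd}(D_M)$ the transformed time is $s=0$, so $\theta^{-1}(t-T_M)=0$ and, by \eqref{theta 0 cond}, $\theta'(0)=1$; thus the first branch collapses to $\bar u(1,x)$, i.e.\ the second branch — this is exactly why the condition $\theta'(0)=1$ was imposed. On $\mathrm{bd}(D_0)$ we have $|\bar u(1,x)|=u_M$ by \eqref{set D0}, so $u_M\bar u(1,x)/|\bar u(1,x)|=\bar u(1,x)$, and the third branch agrees with the second. Hence \eqref{u t act} defines a continuous function of $t$. For the bound: on $\mathbb R^n\setminus D_0$ we have $|u(t)|=u_M$ identically; on $D_0\setminus D_M$, $|u(t)|=|\bar u(1,x(t))|\le u_M$ by the very definition \eqref{set D0} of $D_0$; and on $D_M$, Lemma \ref{lemma u bound} gives $|u(t)|\le k\sqrt{V_0/c_1}\,e^{(1/T-\mu c/2c_2)s}$ with $V_0=V(x(T_M))\le c_1(u_M/k)^2$ since $x(T_M)\in\mathrm{bd}(D_M)$ (or $x_0\in D_M$) and $D_M$ is given by \eqref{set Dm}, so, using $T>2c_2/(\mu c)$, the exponent is nonpositive and $|u(t)|\le k\sqrt{V_0/c_1}\le u_M$. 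Putting the three cases together yields $-u_M\le u(t)\le u_M$ for all $t\ge0$.

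The step I expect to be the main obstacle is the continuity of the first branch at the instants where $x(t)$ reaches $S$: there $b_0(x(t))\to0$ \emph{and} $\theta'(s)\to0$ simultaneously, so the crude estimate on $|\bar u(y,x)|$ (whose $\mu/y$ factor blows up) is not enough, and one must instead invoke the precise exponential decay of $V$ — hence of $\|x\|_S$ via \eqref{v bound} — relative to $\theta'(s)$ that is encapsulated in Lemma \ref{lemma u bound}, together with $T>2c_2/(\mu c)$, to conclude $\bar u\to0$ and thus continuity onto $S$. Everything else (the ratio-of-continuous-functions argument away from $S$, and the two boundary identifications) is routine once $\theta'(0)=1$ and the definitions \eqref{set D0}, \eqref{set Dm} are in hand.
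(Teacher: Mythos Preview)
Your proposal is correct and follows essentially the same route as the paper's proof: continuity of $\bar u$ on each region (including at $S$, where the paper invokes Sontag's small-gain property while you give the explicit estimate via Assumption~\ref{assum: l k1 k2}), matching at $\mathrm{bd}(D_M)$ via $\theta'(0)=1$ and at $\mathrm{bd}(D_0)$ via $|\bar u(1,x)|=u_M$, and the bound $|u|\le u_M$ region by region. Your treatment of the bound is in fact more detailed than the paper's, which simply asserts that it is clear from \eqref{u t act}.
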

\begin{proof}
See Appendix \ref{App proof lemma 2}.
\end{proof}

We divide the problem in two parts: 
\begin{itemize}
    \item In Theorem \ref{T conv origin}, we show that for $x(0)\in D_M$,  the closed loop trajectories reach the set $S$ in time $T$.
    \item In Theorem \ref{finite time set reach}, we show that if $x(0)\notin D_M$, then the closed loop trajectories of \eqref{cont affin sys} reach the set $D_M$ in a finite time $T_0(x_0) + T_1$. 
\end{itemize}

Define $D = D_1\bigcap D_2$, where $D_1$ is defined in Assumption \ref{assum: l k1 k2}, and $D_2$ is defined in Assumption \ref{assum W c}.

\begin{Theorem}\label{T conv origin}
Under the effect of control input \eqref{u t act}, the trajectories of \eqref{cont affin sys} reach the set $S$ in the prescribed time $T$ for all $x_s(0)\in D_M\bigcap D$.  
\end{Theorem}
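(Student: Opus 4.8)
The plan is to carry out the whole argument in the transformed time scale $s$, show that $V$ decays exponentially in $s$, and then translate this back to the original time $t$ via the properties of $\theta$. Since $x_s(0) \in D_M$, the trajectory starts inside $D_M$, so $T_M = 0$ and the active branch of the controller \eqref{u t act} is $u(t) = \bar u(\theta'(\theta^{-1}(t)),x(t))$; under $t = \theta(s)$ this is exactly $u_s(s) = \bar u(\theta'(s),x_s(s))$, so the closed-loop system \eqref{cont affin sys s} is driven by the controller analyzed in Lemma \ref{V prime bound}, and it suffices (by Problem \ref{P1 mod}) to prove $x_s(s) \to S$ as $s \to \infty$.

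First I would combine Lemma \ref{V prime bound} with Assumptions \ref{assum: l k1 k2} and \ref{assum W c}. Whenever $b(s,x_s(s)) \neq 0$, Lemma \ref{V prime bound} gives $V'(x_s) = -\mu\sqrt{a_0(x_s)^2+b_0(x_s)^4} = -\mu W(x_s)$; using $W(x_s) \ge c\|x_s\|_S^2$ from \eqref{w bound} together with $V(x_s) \le \alpha_2(\|x_s\|_S) \le c_2\|x_s\|_S^2$ from \eqref{v bound} and \eqref{c2 bound}, this yields $V'(x_s) \le -\tfrac{\mu c}{c_2}V(x_s) = -2m_2 V(x_s)$. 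When $b = 0$, the lower bound in \eqref{b0 inequality} forces $\|x_s\|_S = 0$, i.e. $x_s \in S$, and then $\|a_0(x_s)\| \le l\|x_s\|_S^2 = 0$ gives $V'(x_s) = 0$; in all cases $V$ is non-increasing along the closed-loop trajectory while it remains in $D$. Since $V$ is non-increasing and $D_M$ is a sublevel set of $V$ lying in $D$, the trajectory stays in $D$ for all $s \ge 0$, so the differential inequality holds globally in $s$, and the comparison lemma gives $V_s(s) \le V_0 e^{-2m_2 s}$.

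Next I would extract convergence to $S$: from \eqref{v bound} and \eqref{c1 bound}, $c_1\|x_s(s)\|_S^2 \le \alpha_1(\|x_s(s)\|_S) \le V_s(s) \le V_0 e^{-2m_2 s}$, hence $\|x_s(s)\|_S \to 0$, i.e. $x_s(s) \to S$ as $s \to \infty$. Invoking Remark \ref{rem_benner} with $t = \theta(s)$ and $\lim_{s\to\infty}\theta(s) = T$ from \eqref{theta prop_d}, this is precisely convergence of $x(t)$ to $S$ as $t \to T^-$, establishing the prescribed-time property; and since $V$ is non-increasing and vanishes exactly on the closed set $S$, the state remains in $S$ for all $t \ge T$, which is the claim of Problem \ref{P1}.

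The step I expect to be the main obstacle is the bookkeeping around the domain $D = D_1 \cap D_2$: Assumptions \ref{assum: l k1 k2}–\ref{assum W c} only hold on $D$, so I must justify carefully that the closed-loop trajectory started in $D_M \cap D$ never leaves $D$ — this should follow from $V$ being non-increasing together with $D_M$ being a sublevel set of $V$ contained in a connected component of $D$, but it needs to be spelled out rather than asserted. Secondary technical points are the degenerate case $b = 0$ (equivalently $x_s \in S$), handled above via \eqref{b0 inequality} and the bound on $a_0$, and the well-posedness of the closed-loop solution, which is guaranteed by the continuity of $u$ established in Lemma \ref{lemmma u cont}.
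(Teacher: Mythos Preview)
Your proposal is correct and follows essentially the same route as the paper: establish $V'(x_s)\le -\tfrac{\mu c}{c_2}V(x_s)$ via Lemma~\ref{V prime bound}, \eqref{w bound} and \eqref{c2 bound}, deduce forward invariance of $D_M$ and exponential decay of $V$ in $s$, and then translate $\lim_{s\to\infty}\|x_s\|_S=0$ into $\lim_{t\to T}\|x(t)\|_S=0$ through the time transformation. Your treatment is in fact more careful than the paper's (you make $T_M=0$ explicit, handle the degenerate case $b=0$ via \eqref{b0 inequality}, and flag the domain bookkeeping for $D$), but the underlying argument is the same.
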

\begin{proof}
Based on \eqref{v bound}, $V(x_s) = 0$ implies $\|x_s\|_S = 0$. So, it is sufficient to prove that $V(x_s(s)) = 0$ as $s\rightarrow 0$. Now, for $x_s\in \textrm{bd}(D_M)$, it follows from \eqref{v prime bound}, \eqref{w bound} and \eqref{c1 c2 bound} that $V'(x_s) \leq -\frac{c}{c_2}V(x_s)<0$. Hence, we obtain that the set $D_M$ is forward-invariant. This implies that $x_s(s) \in  D_M$ for all $s\geq 0$. From \eqref{u t act}, we have $u(t) = \bar u(\theta'(s),x_s)$ for $x_s\in D_M$ and hence, $V'(x_s) \leq -\frac{c}{c_2}V(x_s)$ holds for all $s\geq 0$. Therefore, $\lim_{s\to\infty}V(x_s(s)) = 0$, which in turn implies that $\lim_{s\to\infty}\|x_s(s)\|_S = \lim_{t\rightarrow T}\|x(t)\|_S = 0$. 
\end{proof} 

% Based on \eqref{v bound}, $V(x_s) = 0$ implies $\|x_s\|_S = 0$. So, it is sufficient to prove that $V(x_s(s)) = 0$ as $s\rightarrow 0$. Now, $x_s(0)\in D_M$ implies that $|\bar u|\leq u_M$ at $s = 0$. It follows from \eqref{v prime bound}, \eqref{w bound} and \eqref{c1 c2 bound} that $V'(x_s) \leq -\frac{c}{c_2}V(x_s)$. Hence, we obtain $V(x_s(0^+))\leq V(x_s(0))$, and from \eqref{u bound k V0}, we obtain that $|\bar u|\leq u_M$ for all $s = 0^+$. Furthermore, $V(x_s(s))\leq V(x_s(0))$ also implies that $x_s(s)\in D_M$ for all $s\geq 0$. From \eqref{u t act}, we have $u(t) = \bar u(\theta'(s),x_s)$ for $x_s\in D_M$. This implies that $V'(x_s) \leq -\frac{c}{c_2}V(x_s)$ holds for all $s\geq 0$ and hence, $\lim_{s\to\infty}V(x_s(s)) = 0$, which in turn implies that $\lim_{s\to\infty}\|x_s\|_S = \lim_{t\rightarrow T}\|x(t)\|_S = 0$.
\begin{Remark}
Based on \eqref{set Dm}, the set $D_M$ grows larger as the input bound $u_M$ increases. So, for larger input bounds, the set of initial conditions, from which the trajectories converge to the set $S$ in the prescribed time, is larger.
\end{Remark}

\begin{Theorem}\label{finite time set reach}
If  $x_0\notin D_M$, then the closed-loop trajectories of \eqref{cont affin sys} reach the set $D_M$ in a finite-time $T_0(x_0) + T_1$ for all $x_0\in \bar D \bigcap D$ where $\bar D= \{x\; |\; \|x\|_S \leq \frac{\mu cu_Mc_1}{lc_2k}\}$. If, in addition,  $x_0\in D_0 \setminus D_M$, then $T_0(x_0) = 0$. 
\end{Theorem}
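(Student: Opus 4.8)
The plan is to work entirely in the original time coordinate $t$, since outside $D_M$ the controller \eqref{u t act} is time-invariant (either $\bar u(1,x)$ on $D_0\setminus D_M$ or the saturated version $u_M\bar u(1,x)/|\bar u(1,x)|$ on $D_0^c$), and to produce a differential inequality for $V(x(t))$ that drives the trajectory into $D_M$ in finite time. First I would split according to whether $x_0\in D_0\setminus D_M$ or $x_0\notin D_0$. On $D_0\setminus D_M$ the input equals $\bar u(1,x)$, so by the computation in Lemma \ref{V prime bound} (with $\theta'(s)=1$) we get $\dot V = -\mu\sqrt{a_0^2+b_0^4} = -\mu W(x) \le -\mu c\|x\|_S^2 \le -\tfrac{\mu c}{c_2}V(x)$ using \eqref{w bound} and \eqref{c2 bound}. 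This is a linear decay, which alone would only give asymptotic convergence; but the target set $D_M$ has nonempty interior (it is the sublevel set $\{V\le c_1(u_M/k)^2\}$), so a linear (exponential) decrease of $V$ still reaches the positive level $c_1(u_M/k)^2$ in finite time. Hence on this region $T_0(x_0)=0$ and after an elapsed time $T_1$ — obtained by integrating $\dot V\le -\tfrac{\mu c}{c_2}V$ from $V(x_0)$ down to $c_1(u_M/k)^2$, giving $T_1 \le \frac{c_2}{\mu c}\ln\!\big(V(x_0)/(c_1(u_M/k)^2)\big)$, or more conservatively a bound uniform over $\bar D\cap D$ — the trajectory enters $D_M$.

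Next I would handle $x_0\notin D_0$. Here $u(t) = u_M\,\bar u(1,x)/|\bar u(1,x)| = u_M\,\sgn(\bar u(1,x))$, a saturated input. The key is to bound $\dot V = a_0 + b_0 u$ from above. Since $|b_0 u| = u_M|b_0|$, the worst case is $\dot V \le a_0 + b_0 u$ with the sign of $u$ chosen to make $b_0 u = -u_M|b_0|$ (this is exactly what $\sgn(\bar u)$ does, because $\bar u(1,x) = -(a_0+\mu W)/b_0$ has sign opposite to that of $b_0$ when $a_0+\mu W>0$, which holds since $W\ge 0$; one checks $\bar u(1,x)b_0 = -(a_0+\mu W)<0$, hence $u b_0 = u_M\sgn(\bar u)b_0 = -u_M|b_0|$). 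Therefore $\dot V \le a_0 - u_M|b_0| \le l\|x\|_S^2 - u_M k_1\|x\|_S$ using Assumption \ref{assum: l k1 k2}. On the set $\bar D = \{\|x\|_S \le \frac{\mu c u_M c_1}{l c_2 k}\}$ — note $c=k_1^2$ so this can be rewritten to make the constants line up — the quadratic term is dominated, giving $\dot V \le -\kappa\|x\|_S$ for a positive constant $\kappa$, and then $\dot V \le -\kappa\sqrt{V/c_2}$ by \eqref{c2 bound}, i.e. $\frac{d}{dt}\sqrt{V} \le -\frac{\kappa}{2\sqrt{c_2}}$. Integrating this finite-time (square-root) inequality gives that $V$ reaches the level $c_1(u_M/k)^2$ — i.e. the trajectory hits $\mathrm{bd}(D_0)$ or enters $D_M$ — within a time $T_0(x_0) \le \frac{2\sqrt{c_2}}{\kappa}\big(\sqrt{V(x_0)} - u_M\sqrt{c_1}/k\big)$, which depends on $x_0$ and on $u_M$ as claimed. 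Once on $\mathrm{bd}(D_0)\setminus D_M$ the previous paragraph applies and the remaining time is bounded by $T_1$.

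I would close by assembling the two phases: for $x_0\in \bar D\cap D$, the total time to reach $D_M$ is at most $T_0(x_0)+T_1$ with $T_0(x_0)=0$ when $x_0\in D_0\setminus D_M$, and I would remark that $D$ enters only to ensure the bounds of Assumptions \ref{assum: l k1 k2}–\ref{assum W c} are in force along the trajectory, which requires checking that the trajectory does not leave $D$ before entering $D_M$ — this follows because $V$ is monotonically decreasing throughout, so the trajectory stays in the sublevel set $\{V\le V(x_0)\}$, and one assumes (as is implicit in the statement) that this sublevel set lies in $D$. The main obstacle I anticipate is precisely this forward-invariance/well-posedness bookkeeping at the junctions: verifying that $\dot V < 0$ is strict on $\mathrm{bd}(D_M)$ (already shown in the proof of Theorem \ref{T conv origin}) and on $\mathrm{bd}(D_0)$, so that the switching instants $T_0$, $T_M$ are well defined, the solution is continued unambiguously (Lemma \ref{lemmma u cont} gives continuity of $u$), and the relevant state bounds hold on the whole trajectory segment; the differential-inequality estimates themselves are routine once the sign analysis of the saturated input is pinned down.
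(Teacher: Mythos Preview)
Your two-phase decomposition (exponential decay on $D_0\setminus D_M$ via $u=\bar u(1,x)$, then a finite-time square-root estimate in the saturated region $D_0^c$) is exactly the structure of the paper's proof, and your treatment of the $D_0\setminus D_M$ phase matches it. The gap is in the saturated phase: your claim that $b_0 u = -u_M|b_0|$ because ``$a_0+\mu W>0$, which holds since $W\ge 0$'' is false in general. When $a_0<0$ and $\mu<1$ (the regime of interest here; the paper uses $\mu=0.05$) one can have $a_0+\mu W<0$, in which case $\sgn(\bar u(1,x))=\sgn(b_0)$ and your computation gives $b_0u=+u_M|b_0|$, so $\dot V=a_0+u_M|b_0|$ rather than $a_0-u_M|b_0|$. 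Your subsequent estimate $\dot V\le l\|x\|_S^2-u_Mk_1\|x\|_S$ therefore does not follow, and the constants in $\bar D$ (which carry an explicit $\mu$) cannot be recovered from an argument that drops $\mu$ entirely.

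The paper sidesteps this sign issue by not reducing $b_0u$ to $\pm u_M|b_0|$. Instead it writes $\dot V=a_0+b_0u=a_0-\frac{u_M}{|\bar u(1,x)|}(a_0+\mu W)=a_0\big(1-\tfrac{u_M}{|\bar u(1,x)|}\big)-\tfrac{u_M\mu}{|\bar u(1,x)|}W$, which is valid for either sign of $a_0+\mu W$, and then splits according to $\sgn(a_0)$: for $a_0<0$ the first term is nonpositive (since $|\bar u|>u_M$ outside $D_0$), giving $\dot V\le -\tfrac{u_M\mu}{|\bar u|}W\le -\gamma V^{1/2}$ after bounding $|\bar u|\le k\|x\|_S$; for $a_0>0$ one keeps the $l\|x\|_S^2$ contribution and obtains $\dot V\le \tfrac{l}{c_1}V-\gamma V^{1/2}$, which is negative precisely on the set $\bar D$ in the statement. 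Your argument can be repaired along these lines, but as written the crucial sign step does not go through.
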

\begin{proof}
Consider the case when $x_0\notin D_0$. From \eqref{u t act}, we have $u(t) =  u_M\frac{\bar u(1,x)}{|\bar u(1,x)|}$. The time derivative $\dot V(x(t))$ reads
\begin{align*}
    \dot V &= a_0 + b_0u = a_0 - b_0\frac{u_M}{|\bar u(1,x)|} \frac{a_0 +\mu\sqrt{a_0^2+b_0^4}}{b_0} \\
    & = a_0 - \frac{u_M}{|\bar u(1,x)|} (a_0 +\mu\sqrt{a_0^2+b_0^4})\\
    & = a_0(1-\frac{u_M}{|\bar u(1,x)|}) -\frac{u_M\mu}{|\bar u(1,x)|}\sqrt{a_0^2+b_0^4}.
\end{align*}
Now, if $a_0<0$, we obtain that  $a_0\big(1-\frac{u_M}{|\bar u(1,x)|}\big)<0$ since $|\bar u(1,x)|>u_M$. Using this, we obtain
\begin{align}\label{dot v ao neg}
    \dot V &\leq -\frac{u_M\mu}{|\bar u(1,x)|}\sqrt{a_0^2+b_0^4} \overset{\eqref{w bound}}{\leq} -c\mu \frac{u_M}{|\bar u(1,x)|}\|x\|_S^2  \nonumber \\
    &\overset{\eqref{u bound2}}{\leq} -c\mu\frac{u_M}{k\|x\|}\|x\|_S^2 = -\delta\mu\|x\|_S,
\end{align}
where $k = \big({l(1+\mu)}/{k_1}   + \mu k_2\big)$ and $\delta = {cu_M}/{k}$. Now, using \eqref{v bound}, \eqref{c1 c2 bound} and \eqref{dot v ao neg}, we obtain 
\begin{align}\label{dot V a0 neg ineq}
    \dot V\leq -\gamma V^{\frac{1}{2}},
\end{align}
where $\gamma = \mu\delta c_2^{-\frac{1}{2}}$. Define $\bar V = \min\limits_{x\in \textrm{bd}(D_0)} V(x)$. Now, since $x_0\notin D_0$, we know that $V(x(0))>\bar V$. Denote $t = \bar T$ such that $V(x(\bar T)) = \bar V$. From \eqref{dot V a0 neg ineq}, using Comparison lemma \cite[Lemma 3.4]{khalil2002nonlinear}, we obtain that 
\begin{align*}
    & V(0)^{\frac{1}{2}}-\bar V^\frac{1}{2} \geq \frac{1}{2}\gamma \bar T\implies \bar T\leq \frac{2(V(0)^{\frac{1}{2}}-\bar V^\frac{1}{2})}{\gamma}.
\end{align*}
For the case when $a_0>0$, we have
\begin{align*}
    \dot V &= a_0(1-\frac{u_M}{|\bar u(1,x)|}) -\frac{u_M\mu}{|\bar u(1,x)|}\sqrt{a_0^2+b_0^4}\\
    & \overset{\eqref{a0 b0 bounds}}{\leq }l\|x\|^2-\frac{u_M\mu}{|\bar u(1,x)|}\sqrt{a_0^2+b_0^4}\\
    &\overset{\eqref{dot V a0 neg ineq}}{\leq} l\|x\|^2-\gamma V^{\frac{1}{2}} \overset{\eqref{c1 c2 bound}}{\leq}\frac{l}{c_1}V-\gamma V^{\frac{1}{2}}.
\end{align*}
Now, for $x_0\in \bar D = \{x\; |\; \|x\|_S\leq \frac{\gamma c_1}{l\sqrt{c_2}}\}$, we know that $\frac{l}{c_1}V-\gamma V^{\frac{1}{2}}<0$. Using \cite[Lemma 1]{shen2008semi}, we obtain that for all $x_0\in \bar D$, there exists a finite time $T(x_0)$, such that $V(x(T)) = 0$, which implies that there exists a time $\tilde T(x_0)$ such that $V(x(\tilde T)) = \bar V$. Choose $T_0(x_0) = \max\{\tilde T(x_0), \bar T(x_0)\}$ so that for all $x_0\in \bar D$, the trajectories of \eqref{cont affin sys} reach the set $D_0$ in time $T_0(x_0)$. 

Once the trajectories reach $D_0$, the control input is defined as $u(t) = \bar u(1,x)$, so the inequality \eqref{xbound} holds. Let $x_{\max} = \max\limits_{x\in \textrm{bd}(D_0)}\|x\|_S$. Using \eqref{xbound}, we obtain
\begin{align}\label{x new bound}
    \|x(t+T_0)\|&\leq m_1 e^{-m_2 t} =  \sqrt{\frac{V(T_0)}{c_1}}e^{-m_2t} \nonumber\\
    &\overset{\eqref{c1 c2 bound}}{\leq}\sqrt{\frac{c_2\|x(T_0)\|^2}{c_1}}e^{-m_2t}\leq \sqrt{\frac{c_2}{c_1}}e^{-m_2t}x_{\max}.
\end{align}
Let $t = T_0 +T_1$ be the time instant when the trajectories of \eqref{cont affin sys} enter the set $D_M$, i.e., $x(T_0+T_1)\in \textrm{bd}(D_M)$ and $x(t)\notin D_M$ for all $t<T_0+T_1$. From \eqref{set Dm}, we know that $x\in \textrm{bd}(D_M) \implies \|x\|_S \leq \frac{u_M}{k}\Big(\frac{c_1}{c_2}\Big)^{\frac{1}{2}}$. Using this and \eqref{x new bound}, we obtain
\begin{align*}
    \frac{u_M}{k}\Big(\frac{c_1}{c_2}\Big)^{\frac{1}{2}} \leq \sqrt{\frac{c_2}{c_1}}e^{-m_2t}x_{max}
    \implies t \geq \frac{\log\Big(\frac{c_1}{c_2}\frac{u_M}{k}\Big)}{\log m_2},
\end{align*}
which implies that for all $t\geq  \frac{\log\Big(\frac{c_1}{c_2}\frac{u_M}{kx_{max}}\Big)}{\log m_2}$, $x(t)\in D_M$. Since the above analysis computes the worst case time to reach the set $D_M$, i.e., starting from $x_{max}$, we obtain that $T_1 \leq \frac{\log\Big(\frac{c_1}{c_2}\frac{u_M}{kx_{max}}\Big)}{\log m_2}$.
Hence, for any $x_0\in \bar D$, the closed-loop trajectories of \eqref{cont affin sys} reach the set $D_M$ in a finite-time $T_0(x_0) + T_1$. Since $T_0(x_0)$ is the time required to reach the set $D_0$, we obtain that if $x_0\in D_0$, then $T_0(x_0) = 0$. 
\end{proof}
% \vspace{-2mm}
Note that $T_0 + T_1$, i.e., the time of convergence to the set $D_M$, depends upon the initial condition $x(0)$ and the control bound $u_M$. Now we can state the main result of the paper.

\begin{Theorem}\label{main Thm}
Under the effect of control input \eqref{u t act}, the closed-loop trajectories of \eqref{cont affin sys} reach the set $S$ in a finite time $\bar T \leq T+T_0(x_0)+T_1$ for all $x_0\in \bar D\bigcap D$, where $T$ is the user-defined time as in Problem \ref{P1}.
\end{Theorem}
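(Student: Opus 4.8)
The plan is to obtain the bound by concatenating the two regimes already established in Theorems~\ref{T conv origin} and~\ref{finite time set reach}, splitting on whether the initial condition lies in $D_M$ or not. Since Lemma~\ref{lemmma u cont} guarantees that the switched control law \eqref{u t act} is continuous, the closed-loop system \eqref{cont affin sys} admits a forward solution whose pieces across the switching surfaces $\textrm{bd}(D_0)$ and $\textrm{bd}(D_M)$ can be glued together; this is what makes the time-accounting argument below legitimate.

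If $x_0\in D_M\cap D$, then the controller is in its first branch from $t=0$ with the time transformation initialized at $s=0$ (equivalently $T_M=0$), so Theorem~\ref{T conv origin} applies directly and gives $x(t)\in S$ for all $t\ge T$. As $T_0(x_0),T_1\ge 0$, the claimed bound $\bar T\le T+T_0(x_0)+T_1$ holds trivially in this case.

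If instead $x_0\in(\bar D\cap D)\setminus D_M$, Theorem~\ref{finite time set reach} yields an entry time $T_M\le T_0(x_0)+T_1$ with $x(T_M)\in\textrm{bd}(D_M)$ (and $T_0(x_0)=0$ when $x_0\in D_0\setminus D_M$). At $t=T_M$ the controller \eqref{u t act} switches to its first branch and, by the convention $t=T_M+\theta(s)$, the transformed time is reset to $s=0$. Treating $x(T_M)$ as a fresh initial condition in $D_M$, I would re-run the argument of Theorem~\ref{T conv origin}: the estimate $V'(x_s)\le-\frac{c}{c_2}V(x_s)<0$ on $\textrm{bd}(D_M)$ makes $D_M$ forward-invariant in $s$, whence $\lim_{s\to\infty}V(x_s(s))=0$ and therefore $\lim_{t\to T_M+T}\|x(t)\|_S=0$, i.e.\ $x(t)\in S$ for all $t\ge T_M+T$. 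Combined with $T_M\le T_0(x_0)+T_1$ this gives $\bar T\le T_M+T\le T+T_0(x_0)+T_1$, which completes both cases.

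The substantive point --- and the step I expect to require the most care --- is not the arithmetic but checking that the hypotheses needed to invoke Theorem~\ref{T conv origin} at the re-initialization instant $T_M$ genuinely hold: that theorem requires the initial state to lie in $D_M\cap D$, so one must certify that the entry point $x(T_M)$, and in fact the entire trajectory during the finite-time approach phase, stays inside $D=D_1\cap D_2$, where the bounds \eqref{a0 b0 bounds} and \eqref{c1 c2 bound} used by both theorems are valid. This follows from restricting to $x_0\in\bar D\cap D$ together with the monotone decrease of $V$ (hence of $\|x\|_S$) shown in the proof of Theorem~\ref{finite time set reach}, which confines the trajectory to the sublevel set $\{V\le V(x_0)\}$; I would make this invariance explicit so that the two theorems chain without any gap.
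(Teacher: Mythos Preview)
Your approach is essentially the same as the paper's: a case split on whether $x_0\in D_M$ followed by invoking Theorem~\ref{T conv origin} and Theorem~\ref{finite time set reach} in sequence. The paper's proof is briefer and does not spell out the invariance of $D$ along the approach phase that you flag in your final paragraph; your version is simply a more carefully worded execution of the same chaining argument.
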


\begin{proof}
It follows from Theorem \ref{T conv origin} that for $x(0)\in D_M$, the closed-loop trajectories reach the set $S$ in the prescribed time $T$. From Theorem \ref{finite time set reach}, we obtain that for $x(0)\notin D_M$, the closed-loop trajectories reach the set $D_0$ in time $T_0(x_0)$ and from $D_0$, the closed-loop trajectories reach the set $D_M$ in time $T_1$. Hence, we have that for all $t\geq \bar T$, where $\bar T \leq T+T_0(x_0)+T_1$, $x(t)\in S$ for all $x(0)\in \bar D\bigcap D$. 
\end{proof}

\begin{Remark} \label{rem7}
Note that constant $k = \frac{l(1+\mu)}{k_1}   + \mu k_2$ in the definition of set $D_M$ decreases as the tuning parameter $\mu$ decreases. Hence, for smaller values of $\mu$, the set $D_M$ is larger. On the other hand, the parameter $\gamma$ in the definition of set $\bar D$ is such that $ \gamma \propto \frac{\mu}{{l(1+\mu)}/{k_1} + \mu k_2}$. So, $\gamma$ decreases with $\mu$, and saturates to the maximum value $\gamma_\infty = \frac{1}{{l}/{k_1} + k_2}$ as $\mu$ increases. Hence, the set $\bar D$ shrinks as $\mu$ decreases and saturates as $\mu$ grows larger. On the basis of these trade-offs, the user can choose appropriate value of $\mu$, depending upon requirements on the sets $D_M$ and $\bar D$ for a given problem.
\end{Remark}

\section{Numerical Example}\label{sec: simulation}
We consider the following system:
\begin{align*}
    \dot x_1 &= x_2 + x_1(x_1^2+x_2^2-1)+x_1u, \\
    \dot x_2 &= -x_1+\zeta(x_2)(x_1^2+x_2^2-1)+x_2u,
\end{align*}
where $\zeta(x) = (0.8+0.2e^{-100|x|})\tanh(x)$ and the set $S$ defined as $S = \{x\; |\; \|x\|\leq 1\}$. Note that in the absence of the control input, the trajectories diverge away from the set $S$, i.e., the set $S$ is unstable for the open-loop system. We choose $u_M = 7$, $\mu=0.05$ and use $V = \frac{1}{2}(x_1^2+x_2^2-1)^2$.
In addition, we set $T = 5$ seconds for the time transformation function in \eqref{P1_TT}.
With $\|x\|_S = x_1^2+x_2^2-r^2$, Assumptions \ref{assum: l k1 k2}, \ref{assum W c} are satisfied. 
Figures \ref{fig:x traj} and \ref{fig:u ub} respectively show the closed-loop system trajectories and the control input for different initial conditions. In Figure \ref{fig:x traj}, the set $D_0$ is denoted with red dashed-line and set $D_M$ is denoted with black dotted-line. 
Figure \ref{fig: energy} shows the variation of control Lyapunov function $V$ with time. The control Lyapunov function drops to $1e^{-15}$ within 5 sec for the case when $x(0)\in D_M$ and within 5.5 sec for the other cases. One can see from these figures that the system trajectories reach the set $S$ with bounded control input $u(t) \in [-u_M,\; u_M]$ within the prescribed finite time $T=5$ seconds when $x(0) \in D_M$ and within a finite time greater than 5 sec when $x(0)\notin D_M$. 

\begin{figure}[!ht]
    \centering
        \includegraphics[ width=1\columnwidth,clip]{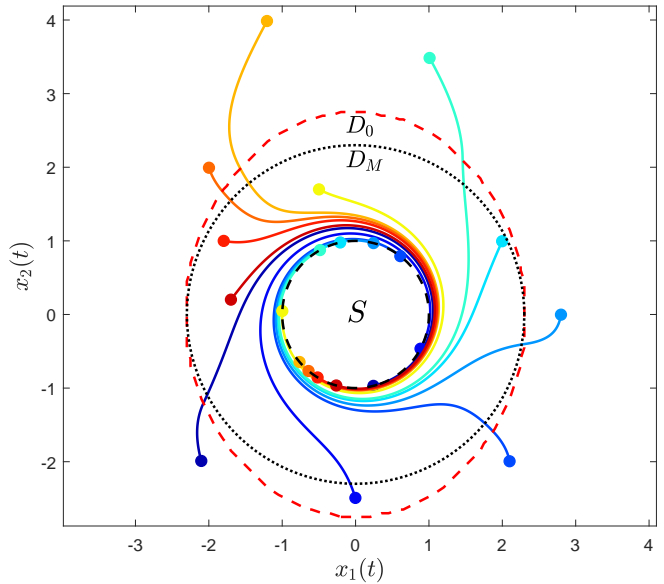}
        \caption{Closed-loop system trajectories for different initial conditions.}
    \label{fig:x traj}
\end{figure}

\begin{figure}[!ht]
    \centering
        \includegraphics[ width=1\columnwidth, trim={.95cm .15cm 1.5cm 1cm},clip]{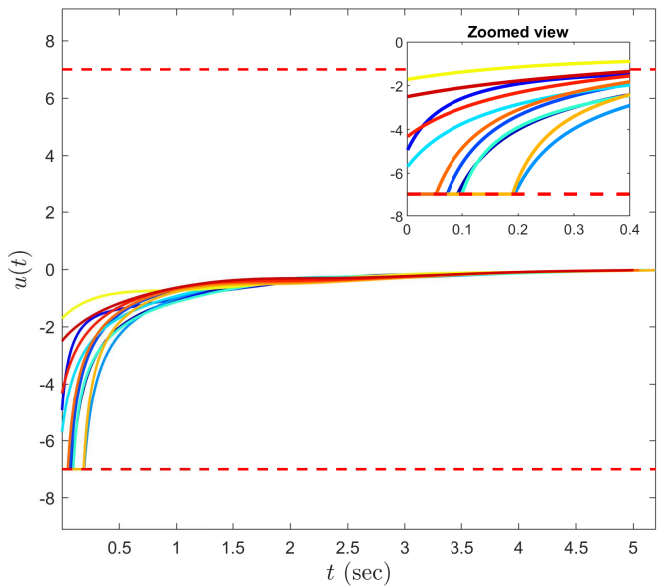}
    \caption{Control input signal $u(t)$ for the system trajectories in Figure \ref{fig:x traj}.}\label{fig:u ub}
\end{figure}

We use $\log$-scale to plot $V(t)$ with time, so that the variation of the function is clear when the values are very small. Also, note that unlike asymptotic convergence, which is plotted as a straight line on the $\log$-scale, the control Lyapunov functions drops super-linearly from $1e^{-3}$ to $1e^{-15}$. This verifies the finite-time convergence nature of the proposed controller.

\begin{figure}[!ht]
    \centering
        \includegraphics[ width=1\columnwidth,clip]{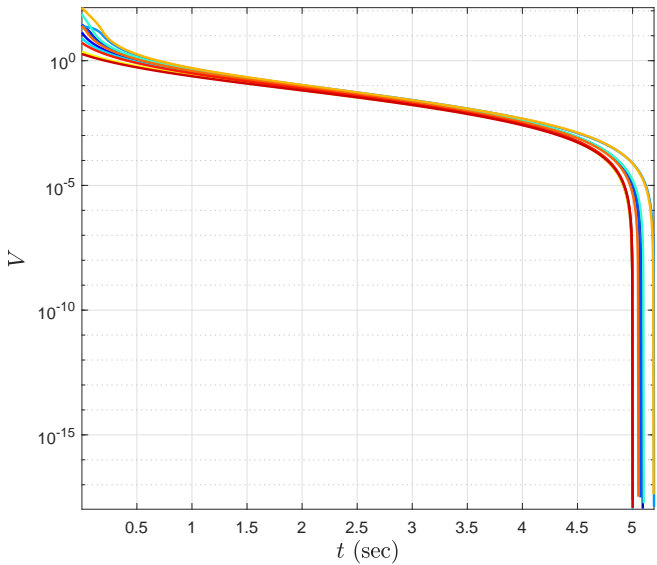}
    \caption{The evolution of the control Lyapunov function $V(t)$ with time for the system trajectories in Figure \ref{fig:x traj}.}\label{fig: energy}
\end{figure}

\begin{figure}[!ht]
    \centering
        \includegraphics[ width=1\columnwidth,clip]{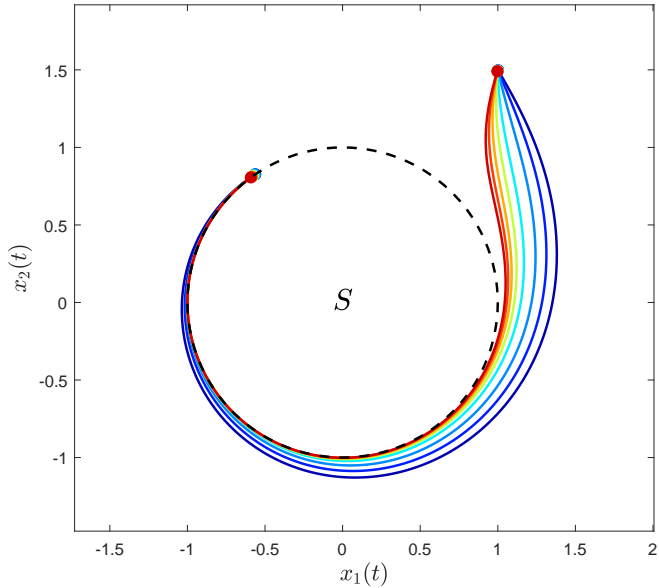}
        \caption{Closed-loop system trajectories for different values of $\mu \in [0.05, 0.35]$ (blue to red).}
    \label{fig:x2 traj}
\end{figure}

To illustrate the effect of the design parameter $\mu$, next we choose $x(0) = [1\; 1.5]^T$ and change $\mu$ from $0.05$ to  $0.35$.
Figures \ref{fig:x2 traj} and \ref{fig:u2 ub} respectively show the closed-loop system trajectories and the control input for different values of $\mu$. 
Figure \ref{fig: energy2} shows the variation of control Lyapunov function $V$ with time. Once again, the system trajectories reach the set $S$ within the prescribed finite-time $T=5$ seconds. As discussed in Remark \ref{rem7}, one can see from Figure \ref{fig:u2 ub} that a smaller value of $\mu$ results in less control effort in the beginning. Also, from \eqref{v prime bound}, it is evident that smaller value of $\mu$ would lead to a slower convergence-rate, which is also clear from Figure \ref{fig:x2 traj}.
% Figures \ref{fig:t0 t1} and \ref{fig:tf} show the variation of $T_0$ and $T_1$, and the convergence time $T_c$, respectively, with $\|x(0)\|_S$. 
% {\textcolor{magenta}{I couldn't find the definition of $T_c$ in the text as the combinations of the times ...}}
% As stated in Theorem \ref{finite time set reach}, we can see that $T_0$ grows with $\|x(0)\|_S$, while $T_1$ remains bounded. From Figure \ref{fig:tf}, we can see that for $x(0)\in D_M$, the closed-loop trajectories converge to the desired set $S$ within $T$ sec. {\textcolor{magenta}{I cannot see this from Fig.4. Shouldn't we have a straight horizontal line for $T_c$ when $x_0$ is in $D_M$?
% Starting from $D_M$, $x$ should converge exactly at $T$ sec. right?}} 
% Also, observe that for some cases when $x(0)\notin D_M$, the closed-loop trajectories converge to the set $S$ in less than $T$ sec. 
% {\textcolor{magenta}{I also don't understand why this is happening. Let's chat tomorrow about this.}}

\begin{figure}[!ht]
    \centering
        \includegraphics[ width=1\columnwidth,clip]{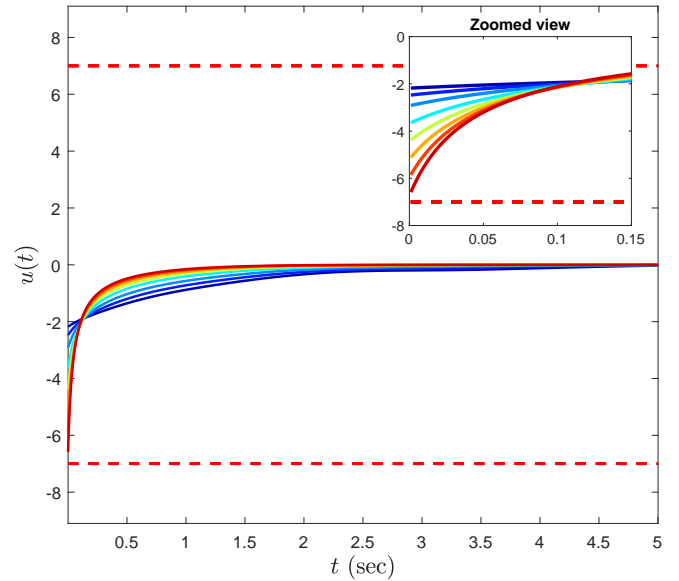}
    \caption{Control input signal $u(t)$  for the system trajectories in Figure \ref{fig:x2 traj}.}\label{fig:u2 ub}
\end{figure}

\begin{figure}[!ht]
    \centering
        \includegraphics[ width=1\columnwidth,clip]{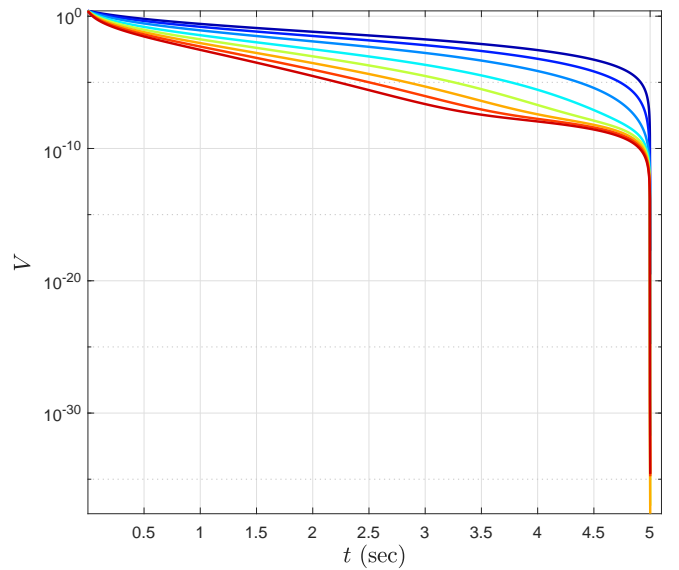}
    \caption{The evolution of the control Lyapunov function $V(t)$ with time for the system trajectories in Figure \ref{fig:x2 traj}.}\label{fig: energy2}
\end{figure}

% \begin{figure}[!ht]
%     \centering
%         \includegraphics[ width=1\columnwidth,clip]{t0_t1.eps}
%     \caption{Variation of $T_0$ and $T_1$ with $\|x(0)\|_S$. Blue line denotes $T_0$ and red line denotes $T_1$. }\label{fig:t0 t1}
% \end{figure}

% \begin{figure}[!ht]
%     \centering
%         \includegraphics[ width=1\columnwidth,clip]{Tf.eps}
%     \caption{Variation of time of actual convergence time $T_c$ with $\|x(0)\|_S$. Red vertical dotted line denotes the maximum value of $\|x(0)\|_S\in D_M$, blue dotted line denotes the prescribed time $T$ and blue line plots the actual time of convergence $T_c$.}\label{fig:tf}
% \end{figure}

\section{Conclusions}\label{sec: conclusion}
We presented a method of designing control law for a class of control-affine systems with input constraints so that the closed-loop trajectories reach a given set in a prescribed time. We showed that the set of initial conditions $D_M$ from which this convergence is guaranteed, is a function of the convergence time, and that this set grows larger as the input bound or the time of convergence increases. Furthermore, the proposed controller drives the system trajectories to this set within a finite time, that depends upon the initial condition. Finally, we introduced a new set of sufficient conditions to ensure the existence of a prescribed-time stabilizing controller. 

\bibliographystyle{IEEEtran}
\bibliography{myreferences}

\appendices

\section{Proof of Lemma \ref{lemma u bound}}\label{App proof lemma 1}

\begin{proof}
It follows from \eqref{v prime bound} and \eqref{w bound} that
\bequ
V'(x_s) \leq -c \mu \|x_s\|_S^2 \leq -\mu \frac{c}{c_2} V(x_s), \label{vprime2}
\eequ
where \eqref{v bound} and \eqref{c2 bound} are used in the last inequality. Per Comparison lemma, the function $V(x_s(s))$ satisfies
\bequ
V \leq V_0 e^{-\mu \frac{c}{c_2}s}, \label{vsol}
\eequ
with $V_0 = V(x_s(0))$. Using  \eqref{c1 c2 bound} and \eqref{vsol}  in \eqref{v bound} yields 
\bequ
\|x_s\|_S \leq m_1 e^{-m_2 s}. \label{xbound}
\eequ 
with $m_1 = \sqrt{\frac{V_0}{c_1}} $ and $m_2 = \frac{\mu c}{2 c_2}>0$. Now, using \eqref{a b a0 b0 rel} for the case when $b_0 \neq 0$, the control input \eqref{u s} can be rewritten as 
\bequ
\bar u = -\frac{a_0}{b_0} - \mu\frac{|b_0|}{b_0} \sqrt{\Big(\frac{a_0}{b_0 \theta'(s)}\Big)^2+\Big(\frac{b_0}{\theta'(s)}\Big)^2},
\eequ
The control input upper bound can be written as
\begin{align*}
    |\bar u| & \leq  \frac{|a_0|}{|b_0|} \Big(1+\frac{\mu}{\theta'(s)}\Big) + \mu\frac{|b_0|}{\theta'(s)} \\
    & \overset{\eqref{a0 b0 bounds}}{\leq }\bigg(\frac{l(\theta'(s)+ \mu)}{k_1}   + \mu{k_2}\bigg) \frac{\|x_s\|_S}{\theta'(s)}.
\end{align*}
For the considered candidate time transformation function in \eqref{P1_TT}, one can further obtain
\begin{align}\label{u bound2}
    |\bar u| \leq  \bigg(\frac{l(1+\mu)}{k_1}   + \mu k_2\bigg) \frac{\|x_s\|_S}{e^{-s/T}} \overset{\eqref{v bound}}{\leq} \frac{k}{\sqrt{c_1}}\frac{\sqrt V}{e^{-s/T}},
\end{align}
where $k =  \big(\frac{l(1+\mu)}{k_1}   + \mu k_2\big)$. Using \eqref{vprime2}, we obtain
\begin{align}\label{u bound3}
    |\bar u| \leq  \frac{k}{\sqrt{c_1}} \sqrt {V_0}e^{-\mu\frac{c}{2c_2}s+s/T}.
\end{align}
Hence, for $T>\frac{2c_2}{\mu c}$, we obtain that $\lim_{s\to\infty}|\bar u|= 0$.
% Define $k =  \big(\frac{l(1+\mu)}{k_1}   + \mu k_2\big)$ so that, using \eqref{xbound} in \eqref{u bound2} yields
% \begin{align}\label{u bound3}
%     |\bar u| \leq  km_1e^{-m_2s+s/T} = k\sqrt{\frac{V_0}{c_1}}e^{-m_2s+s/T}.
% \end{align}
% For $T>\frac{1}{m_2} = 2\frac{c_2}{\mu c}$, we obtain that $\lim_{s\to\infty}|\bar u|= 0$.
\end{proof}

\section{Proof of Lemma \ref{lemmma u cont}}\label{App proof lemma 2}
\begin{proof}
It is clear from \eqref{u t act} that $|u(t)|\leq u_M$ for all $t\geq 0$. From \eqref{u bound3}, we know that for any $\epsilon>0$, there exists a $\delta>0$ such that if $\|x\|_S<\delta$, then $|\bar u|<\epsilon \triangleq m_3 \delta^{m_4}$. Also, for $\delta<\Big(\frac{u_M}{m_3}\Big)^\frac{1}{m_4}$, we know that $|\bar u|<u_M$, which implies that $V'(x)<0$. So, $\bar u$ satisfies the \textit{small-gain} property \cite{sontag1989universal} and hence, it is continuous at the origin. Since functions $a,b$ are continuous and due to \eqref{assum: l k1 k2}, $|b|>0$ for $\|x\|_S>0$, we have that $\bar u$ is continuous for all $x$. 

Now, assume that $x(0)\notin D_0$ and $t = T_0$ is the instant when the trajectories of \eqref{cont affin sys} enter the set $D_0$, i.e., $x(T_0)\in \textrm{bd}(D_0)$ and $x(t)\notin D_0$ for all $t<T_0$. Then, control input at time instant $T_0^-$ is given by
\begin{align*}
    u(T_0^-) & = u_M\frac{\bar u(1,x(T_0))}{|\bar u(1,x(T_0))|} = \bar u(1,x(T_0)),
\end{align*}
since $x(T_0)\in \textrm{bd}(D_0)$ implies $|\bar u| = u_M$. At time instant $T_0^+$, we have $u(T_0^+) = \bar u(1,x(T_0))$. This implies $u(T_0^-) = u(T_0^+)$. Now, let $t = T_M$ be the time instant when the trajectories of \eqref{cont affin sys} reach the set $D_M$. From \eqref{u t act}, we know that 
\begin{align*}
    u(T_M^+) &=  \bar u(\theta'(\theta^{-1}(t-T_M)),x(t))|_{t = T_M} \\
    & = \bar u(\theta'(\theta^{-1}(0)),x(T_M))\overset{\eqref{theta prop}}{=} u(1,x(T_M)) = u(T_M^-),
\end{align*}
which completes the proof. 
\end{proof}

\end{document}